\begin{document}
\title{Causal Games and Causal Nash Equilibrium}
%
%\titlerunning{Abbreviated paper title}
% If the paper title is too long for the running head, you can set
% an abbreviated paper title here
%
\author{Mauricio Gonzalez-Soto\inst{1}\orcidID{0003-2668-9013} \and
Luis E. Sucar\inst{1} \orcidID{0002-3685-5567} \and
Hugo Jair Escalante\inst{1,2} \orcidID{0003-4603-3513}}
\authorrunning{M. Gonzalez-Soto et al.}
% First names are abbreviated in the running head.
% If there are more than two authors, 'et al.' is used.
%
\institute{Coordinación de Ciencias Computacionales,\\  Instituto Nacional de Astrofísica Óptica y Electrónica, Tonanzintla, Mexico. \and
Departamento de Computación,\\ Centro de Investigaciones y Estudios Avanzados del IPN, Zacatenco, Mexico\\
\email{ \{mauricio, esucar, hugojair\}@inaoep.mx}}

\maketitle              % typeset the header of the contribution
\begin{abstract}
Classical results of Decision Theory, and its extension to a multi-agent setting: Game Theory, operate only at the \textit{associative} level of information; this is, classical decision makers only take into account probabilities of events; we go one step further and consider \textit{causal} information: in this work, we define Causal Decision Problems and extend them to a multi-agent decision problem, which we call a causal game. For such games, we study belief updating in a class of strategic games in which any player's action \textit{causes} some consequence via a causal model, which is unknown by all players; for this reason, the most suitable model is Harsanyi's Bayesian Game. We propose a probability updating for the Bayesian Game in such a way that the knowledge of any player in terms of probabilistic beliefs about the causal model, as well as what is caused by her actions as well as the actions of every other player are taken into account. Based on such probability updating we define a Nash equilibria for Causal Games.
\end{abstract}

\section{Introduction}
Causal reasoning is a constant element in our lives as it is human nature to constantly ask \textit{why}. Looking for causes is an everyday task and, in fact, causal reasoning is to be found at the very core of our minds~\cite{waldmann2013causal,danks2014unifying}. It has been argued that the brain itself is a causal inference machine which uses \textit{effects} to figure out \textit{causes} in order to actively engage with the world~\cite{friston2010free,clark2015surfing,lake2017building}. 

An important aspect of acting in the world is being able to make decisions under uncertain conditions~\cite{danks2014unifying,lake2017building}. In their seminal work~\cite{von1944theory}, von Neumann and Morgenstern answered how to make choices if \textit{rational} preferences are assumed and the decision maker knows the stochastic relation between actions and outcomes: maximize expected utility. If such relation is unknown, then J.L. Savage showed in \cite{savage1954the} that a rational decision maker must choose \textit{as if} is maximizing her expected utility with respect to a \textit{subjective} probability distribution. 

The previous results of von Neumann-Morgenstern and Savage provide formal criteria for decision making if rationality is assumed and information about the environment is considered only at the \textit{asociative} (i.e., probabilistic) level. These criteria are the basis for many of the techniques used in Artificial Intelligence; for example, Reinforcement Learning algorithms learn \textit{optimal policies} that satisfy the Bellman Equations~\cite{sutton1998reinforcement,Puterman:1994:MDP:528623}; therefore, any action prescribed by an optimal policy achieves the maximum expected utility as shown in~\cite{webb2007game}.

Several studies have considered how human beings use causal information when making decisions with uncertain outcomes. It is known that humans tend to prefer causal information over purely probabilistic data \cite{tversky1980causal}; and, in fact, it is shown in \cite{hagmayer2009decision} that acting in the world is conceived by human beings as \textit{intervening} on it; Therefore, it does not come up as a surprise that humans are able to learn and use causal relations while making single choices as well as in sequential decisions as shown in~\cite{danks2014unifying,sloman2006causal,Garcia-Retamero2006,nichols2007decision,lagnado2007beyond,hagmayer2008causal,meder2010observing,hagmayer2013repeated,rottman2014reasoning,hagmayer2017causality}.

Decision problems faced by a rational agent usually involve the decisions made by other agents as well as other, possibly unknown, factors. As seen in several applications, interactive reasoning is a fundamental aspect of human every-day reasoning and it should be addressed by any intelligent agent as argued in \cite{lake2017building}. We consider to be of interest the multi-agent setting for causal decision making; for this reason, we consider the interaction of several rational and causal-aware decision makers whose decisions affect each other. 

Game Theory \cite{osborne1994course} deals with situations in which several \textit{rational} decision makers, or players, \textit{interact} while pursuing some well-defined objective; the case in which decision makers make a choice simultaneously without knowing the choice made by the other players is called a \textit{strategic game}; a well-known strategic game is the famous \textit{prisoners' dilemma} in which two detainees must choose between confessing or remaining silent and both know the consequences of any combination of actions, what is ignored by each player is the decision made by the other. 

When players ignore both the actions made by other players as well as the knowledge that made them choose a certain action, is called a game with incomplete information, or a \textit{Bayesian Game} which was introduced by Harsanyi in \cite{harsanyi1967games1, harsanyi1968games2, harsanyi1968games3}. In this work we will use the Bayesian Game model in order to study what happens when several decision makers have certain knowledge about an environment which is controlled by some, unknown but fixed, causal mechanism. We will first study one-player games, or decision problems, in which the player's actions \textit{cause} some consequence according to some unknown causal model; for this case, we will provide a rational choice criterion which will serve us to define a Nash equilibrium in Causal Games.
\section{Causation and Classical Decision Problems}
\subsection{Causation}
The notion of causation deals with regularities found in a given environment which are stronger than probabilistic (or associative) relations in the sense that a causal relation allows for evaluating a change in the \textit{consequence} given that a change in the \textit{cause} is performed, while probabilistic relations only capture patterns that appear on observed data. For example, when training only on observed samples $(x,y)$, a Bayesian Network can be equally trained as $X \to Y$ or $Y \to X$, see \cite{bengio2019metatransfer} Apendix A for a theoretical argument.

We adopt here the \textit{manipulationist} interpretation of Causality as expressed by Woodward in \cite{woodward2005making}. The main paradigm is clearly expressed in \cite{campbell1979quasi} as \textit{manipulation of a cause will result in a manipulation of the effect}. Consider the following example from \cite{woodward2005making}: manually forcing a barometer to go down won't cause a storm, whereas the occurrence of a storm will cause the barometer to go down. 

We restrict ourselves to probabilistic causation and adopt the formal definition of Causality given in \cite{spirtes2000causation}; i.e., a stochastic relation between events which is \textit{irreflexive, antisymmetric} and \textit{transitive}; such formal definition is encompassed by the manipulationist interpretation. Similar descriptions of the manipulationist approach can be found in \cite{holland1986statistics} and \cite{freedman1997association}. Causal inference tools, such as Pearl's do-calculus, stated in \cite{pearl2009causality}, allows to find the effect of an intervention in terms of probabilistic information when certain conditions are met. For what remains, we assume the \textit{causal axioms} found in \cite{spirtes2000causation} with the condition known as \textit{causal sufficiency}.
\subsection{Classical Decision Theory}
Classical decision making consist of a set $\mathcal{A}$ of available options to a rational decision maker, and a family $\mathcal{E}$ of uncertain events which will affect the consequence of the action made by the decision maker; any knowledge by the decision maker of such uncertainties is available only at the associative, or probabilistic, level of information. We now state the formal framework for classical decision making as we will use it in order to build upon the causal version of it:
\begin{definition}
An uncertain environment is the tuple $(\Omega, \mathcal{A},\mathcal{C},\mathcal{E})$. Where $\mathcal{A}$ is a non-empty set of available actions, $\mathcal{C}$ a set of consequences and $\mathcal{E}$ an algebra of events over $\Omega$. 
\end{definition}
When we consider the preferences of some decision maker over the set of consequences of some uncertain environment we have a Decision Problem under Uncertainty:
\begin{definition}
A Decision Problem under Uncertainty is an uncertain environment $(\Omega, \mathcal{A},\mathcal{C},\mathcal{E})$ plus a preference relation $\succeq$ defined over $\mathcal{C}$ 
\end{definition}
\subsection{Causal Environments and Causal Decision Problems}
In this section we define a Causal Environment to be an \textit{uncertain environment} for which there exists a Causal Graphical Model (CGM) $\mathcal{G}$ which controls the environment. Details on CGMs can be found in \cite{koller2009probabilistic}.
\begin{definition}
A Causal Environment is a tuple $(\Omega, \mathcal{A},\mathcal{G},\mathcal{C},\mathcal{E})$ where $(\Omega, \mathcal{A},\mathcal{C},\mathcal{E})$ is an uncertain environment and $\mathcal{G}$ is a CGM such that the set of variables of $\mathcal{G}$ correspond to the uncertain events in $\mathcal{E}$.
\end{definition}
\subsection{Rational choice in Causal Environments}
Consider a decision maker who knows that any action she takes will \textit{cause} a certain action, but she does not explicitly knows the form of such causal relation, she only have probabilistic beliefs about such relation. We define in this section a formal framework for studying such situations.
\begin{definition}
We define a Causal Decision Problem (CDP) as $(\mathcal{A}, \mathcal{G},\mathcal{E},\mathcal{C},\succeq)$ where $(\mathcal{A}, \mathcal{G},\mathcal{E},\mathcal{C})$ is a Causal Environment and $\succeq$ is a preference relation. 
\end{definition}
For the CGM in a given CDP we will distinguish two particular variables: one corresponding to the available actions, and one corresponding to the caused outcome. We are considering that only one variable can be intervened upon and that the values of such variable represent the actions available to the decision maker; i.e., the value forced upon such variable under an intervention represents the action taken by the decision maker. The intuition behind the definition of a Causal Decision Problem is this: a decision maker chooses an action $a \in \mathcal{A}$, which is automatically inputed into the model $\mathcal{G}$, which outputs the \textit{causal outcome} $c \in \mathcal{C}$. We say a CDP is \textit{finite} if the set $\mathcal{A}$ is finite. We now provide a decision criterion for rationally choosing in a Causal Decision Problem.
\begin{theorem}{\label{causal_savage}}
In a finite Causal Decision Problem  $(\mathcal{A}, \mathcal{G},\mathcal{E},\mathcal{C},\succeq)$, where $\mathcal{G}$ is a Causal Graphical Model, we have that the preferences $\succeq$ of a decision maker are Savage-rational if and only if there exists a probability distribution $P_C$ over a family $\mathcal{F}$ of causal models such that for $a,b \in \mathcal{A}$:
\begin{eqnarray*}
a \succeq & b \textrm{ if and only if }&\\
 \sum_{c \in \mathcal{C}} u(c) \left( \sum_{g \in \mathcal{F}} P_g(c | do(a))P_C(g) \right)\\ &\geq & \\ \sum_{c \in \mathcal{C}}  u(c) \left( \sum_{g \in \mathcal{F}} P_g(c | do(b))P_C(g) \right),
\end{eqnarray*}
where $P_g$ is the probability distribution associated with the causal model $g$. 
\end{theorem}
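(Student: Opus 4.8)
The plan is to reduce the Causal Decision Problem to a classical Savage decision problem for which the standard representation theorem applies, and then to rewrite the resulting expected-utility functional into the causal form demanded by the statement. The natural choice is to take the family $\mathcal{F}$ of candidate causal models as the state space: the decision maker's only uncertainty is about which model in $\mathcal{F}$ actually governs the environment, and once a model $g$ is fixed the causal semantics assumed earlier (the causal axioms together with causal sufficiency, which make Pearl's $do$-operator well defined) determine the interventional distribution $P_g(\,\cdot \mid do(a))$ over $\mathcal{C}$ for every action $a \in \mathcal{A}$. Thus each action $a$ induces a map $f_a : \mathcal{F} \to \Delta(\mathcal{C})$, $g \mapsto P_g(\,\cdot \mid do(a))$, which I will treat as an act whose consequence in state $g$ is the lottery $P_g(\,\cdot \mid do(a))$.

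First I would make precise the correspondence $a \mapsto f_a$ and verify that the preference $\succeq$ on $\mathcal{A}$ transports to a preference on the induced acts $\{f_a : a \in \mathcal{A}\}$ in a way that preserves the Savage axioms. Because the consequences here are themselves lotteries over $\mathcal{C}$ rather than deterministic outcomes, the cleanest vehicle is the Anscombe--Aumann refinement of Savage's framework, in which acts send states to lotteries and the representation takes the mixed form $V(f) = \sum_{g} P_C(g)\sum_{c} u(c)\, f(g)(c)$. I would invoke this representation theorem to obtain simultaneously a subjective probability $P_C$ on $\mathcal{F}$ and a utility $u$ on $\mathcal{C}$ such that $a \succeq b$ if and only if $V(f_a) \geq V(f_b)$.

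It then remains a routine rearrangement: substituting $f_a(g) = P_g(\,\cdot\mid do(a))$ into $V$ and exchanging the two finite sums gives
\begin{eqnarray*}
V(f_a) &=& \sum_{g \in \mathcal{F}} P_C(g) \sum_{c \in \mathcal{C}} u(c)\, P_g(c \mid do(a))\\
&=& \sum_{c \in \mathcal{C}} u(c)\left( \sum_{g \in \mathcal{F}} P_g(c \mid do(a))\, P_C(g)\right),
\end{eqnarray*}
which is exactly the functional appearing on the left-hand side of the statement; the identical computation for $b$ yields the claimed equivalence. The converse direction is easier: given such $P_C$ and $u$, the displayed inequality defines a preference that is manifestly an expected-utility ordering and therefore satisfies the Savage axioms, so $\succeq$ is Savage-rational.

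The main obstacle I anticipate is the mismatch between the finiteness of $\mathcal{A}$ and the richness conditions under which the Savage and Anscombe--Aumann representations are usually stated: those theorems require that all acts over the state space be available, and in Savage's case a non-atomic state space, whereas here only the finitely many acts $f_a$ arise. I would address this either by embedding the induced acts into the full act space of an auxiliary problem on $\mathcal{F}$ and extending $\succeq$ consistently, or by appealing to a finite-state version of the representation theorem. Ensuring that this extension respects the $do$-calculus semantics, so that the recovered probability genuinely lives on models rather than on model--noise pairs, is the delicate step.
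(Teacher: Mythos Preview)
Your proposal is correct and in fact cleaner than the paper's argument, but the two routes differ in where the subjective probability over causal models comes from. The paper applies Savage's theorem directly to the ambient decision problem, obtaining a single subjective measure $P^S$ over events, and then \emph{asserts} that this measure decomposes as a mixture $P^S(E_j)=\sum_{g}P_g(c_j\mid do(a))P_C(g)$, invoking Carath\'eodory to ensure a suitable $P_C$ exists; the decomposition itself is not derived but postulated on the grounds that the agent ``must have in mind'' both a model and a belief over models. You instead \emph{build} the state space as $\mathcal{F}$ from the outset and let each action be the Anscombe--Aumann act $g\mapsto P_g(\cdot\mid do(a))$, so that the representation theorem hands you $P_C$ directly and the displayed functional falls out by substitution. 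Your route makes the role of $\mathcal{F}$ structural rather than interpretive, handles the lottery-valued consequences naturally, and treats the converse direction explicitly, which the paper essentially omits. The price is the richness issue you flag: Anscombe--Aumann needs closure under mixtures of acts, which the finite set $\{f_a:a\in\mathcal{A}\}$ does not give; the paper sidesteps this by leaning on Savage in the abstract, though it never confronts the analogous non-atomicity requirement either.
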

\begin{proof}
The decision maker is facing an environment in which any action she takes will stochastically cause an outcome $c \in \mathcal{C}$. For this reason, the decision making is facing a very particular case of decision making under uncertainty. Assuming rationality, we invoke Savage's Theorem~\cite{savage1954the,kreps1988choice,gilboa2009decision} to obtain a utility function $u^S$ and a probability measure $P^S$ which satisfy that the preference relation is represented by the expectation of $u^S$ with respect to $P^S$. 

In such a causal environment, the CGM $\mathcal{G}$ contains all of the information which connects actions, uncertain events and outcomes, and noting that we can identify any action $a$ with $\{ c_j | E_j : j \in J  \}$ where $J$ a countable set of indexes~\cite{bernardo2000bayesian} we have that:
\[\mathbb{E}_{P^S}[u(c)] = \sum_{j \in J} u(c_j)P^S(E_j).\]

For each action $a=\{ c_j | E_j : j \in J \}$, $P^S(E_j)$ is the probability of \textit{causing} consequence $c_j$ by choosing action $a$. In order for the decision maker to find the probability of a certain consequence $c_j$ given that an action $a$ is performed then she must have in mind a single causal model $g$ and a way to assign probabilities over a family of causal models; i.e., the uncertainty component $P^S(E_j)$ is formed by two parts: a distribution $P_C$ which represents the degree of belief of the decision maker about a specific model $g$ being the true one, and within $g$, a distribution $P_g$ used to calculate the probability of causing some consequence $c_j$ given that action $a$ is chosen. Using the Caratheodory Extension Theorem~\cite{ash2000probability} a probability measure  $P_C$ whose support is a sufficiently general family of causal models $\mathcal{F}$ can be shown to exist. For $g \in \mathcal{F}$, the decision maker considers $g$ to be the true causal model with probability $P_C(g)$, and within $g$, we use the classical von Neumann-Morgenstern Theorem in order to obtain the best action (see section 4.1 of \cite{pearl2009causality} for details). Let $P_g$ the probability distribution associated with the causal model $g$. 

Then:
\begin{eqnarray}
\mathbb{E}_{P^S}[u(c)] &=& \sum_{j \in J} u(c_j)P^S(E_j)\\
                                      &=& \sum_{j \in J} u(c_j) \left( \sum_{g \in \mathcal{F}} P_g(c_j | do(a))P_C(g) \right).
\end{eqnarray}
We have shown what is the expected utility for some action $a \in \mathcal{A}$, and by Savage's Theorem the result follows. 
\end{proof}
\subsection{Interpretation}
Theorem \ref{causal_savage} says that a decision maker who faces a Causal Decision Problem is considering a probability distribution $P_C$ over a family $\mathcal{F}$ and, within each structure, using the term $P_g(c|do(a))$ in order to find the probability of obtaining a certain consequence given that the intervention $do(a)$ is performed; in this way, the optimal action $a^\ast$ is given by:
\begin{equation}
a^\ast = \textrm{ argmax }_{a \in \mathcal{A}}  \sum_{c \in \mathcal{C}} u(c) \left( \sum_{g \in \mathcal{F}} P_g(c | do(a))P_C(g) \right). 
\end{equation}
We note that $a^\ast$ is obtained by taking into account the utility obtained by every possible consequences weighted using both the probability of causing such action within a specific causal model $g$ and the probability that the decision maker assign to such $g \in \mathcal{F}$.

We are considering a \textit{normative} interpretation for Theorem \ref{causal_savage} according to which a decision maker must use any causal information in order to obtain the best possible action. Such action must be obtained by considering the \textit{beliefs} of the decision maker about the causal relations that hold in her environment (the distribution $P_C$), how such relations could produce the best action when considered \textit{as if} they were true (distribution $P_g$), and the satisfaction (utility $u$) produced by the consequences of actions~\cite{koller2009probabilistic}.
\section{Classical strategic games}
A \textit{strategic game} is a model of a situation in which several players must take an action and afterwards they will be affected both by the outcome of their own action as well as the actions of the other players. In a strategic game it is assumed that no player knows the action taken by any other players; this is, 
\begin{definition}
a \textbf{strategic game} (\cite{osborne1994course}) consists of:
\begin{itemize}
\item A finite set $N$ of $n$ players.
\item For each player, a nonempty set $A_i$ of available actions.
\item For each player, a preference relation $\succeq_i$ defined over $A= A_1 \times \cdots A_n$.
\end{itemize}
\end{definition}
\begin{definition}
A Nash equilibrium of a strategic game $G=(N,(A)_{i \in N},(\succeq_i)_{i \in N})$ is a vector of strategies $a^\ast=(a_1,a_2,...,a_n)$ such that
\[ (a^\ast_{-i},a^\ast_i) \succeq_i (a^\ast_{-i}, b_i) \textrm{ for all } b_i \in A_i, \]
where $a_{-i}=(a_1,...,a_{i-1}, a_{i+1},...,a_n)$.
\end{definition}
This is, in a Nash equilibrium no player can find a better action given the actions taken by the rest of the players. We adopt here the \textit{deductive} interpretation of an equilibrium, according to which an equilibrium results from rationality principles~\cite{binmore1987modeling,binmore1988modeling}.

\section{Causal Games}
In this section, we define a \textit{causal strategic game} as a strategic game within a causal environment; this is, consider a \textit{strategic} game between $N$ rational players who are situated in a causal environment. We assume that it is \textit{common knowledge} the causal nature of the environment as well as the rationality assumption for each player. We also assume that the causal mechanism, which represented by a Causal Graphical Model $\mathcal{G}$, remains fixed and it is unknown for each player. In this game, players ignore the actions taken by any other player, and since the causal model which controls the environment is unknown by every player, then players also ignore the information that players will use in order to take their respective actions: for this reason, we will work within the framework of \textit{bayesian games}.
\subsection{Bayesian Games}
Strategic games are games in which no player knows the action taken by the other players; we now consider a type of game in which no player knows both the actions taken by any other player, nor the \textit{private information} that made each player to take any action. Such model is called a \textit{Bayesian Game}, introduced in \cite{harsanyi1967games1,harsanyi1968games2,harsanyi1968games3}
\begin{definition}
A Bayesian strategic game(\cite{osborne1994course}), consists of:
\begin{itemize}
\item A finite set $N$ of players.
\item A finite set $\Omega$ of \textit{states of nature}.
\item For each player, a nonempty set $A_i$ of actions.
\item For each player, a finite set $T_i$ and a function $\tau_i : \Omega \mapsto T_i$ the signal function of the player
\item For each player, a probability measure $p_i$ over $\Omega$ such that $p_i (\tau^{-1}_i (t_i))>0$ for all $t_i \in T_i$.
\item A preference relation $\succeq_i$ defined over the set of probability measures over $A \times \Omega$ where $A= A_1 \times \cdots A_n$
\end{itemize}
\end{definition}

\subsection{Bayesian Causal Games}
In this section, we consider a \textit{strategic game} between $N$ rational players who are situated in a causal environment. A game is a model of a situation in which several players must take an action and afterwards they will be affected both by the outcome of their own action as well as the actions of the other players. In a strategic game it is assumed that no player knows the action taken by any other players; we also assume that the causal mechanism, which represented by a Causal Graphical Model $\mathcal{G}$, remains fixed and it is unknown for each player. 

In this game, players ignore the actions taken by any other player, and since the causal model which controls the environment is unknown by every player, then players also ignore the information that players will use in order to take their respective actions: strategic games of this type are called \textit{Bayesian Game}, introduced in \cite{harsanyi1967games1,harsanyi1968games2,harsanyi1968games3}. In the games we will consider, the uncertainty of every player consists of two levels: on a first level, the true causal model $\mathcal{G}$; on a second level, what an action $do(a)$ causes if a certain CGM $\omega$ is considered to be the causal model. 

We will consider the set $\Omega$ to be a family of possible causal models; in this way,  $\omega \in \Omega$ being the true state of nature fixes a causal model which controls the environment in which the players make their choices. In classical Bayesian games, once $\omega \in \Omega$ is realized as the true state, then each player receives a signal $t_i=\tau_i (\omega)$ and the posterior belief $p_i(\omega | \tau^{-1}_i (t_i) )$ given by $p_i(\omega) / p_i (\tau^{-1}_i (t_i))$ if $\omega \in \tau^{-1}_i (t_i)$. In the case for causal bayesian games, we must consider both the probability $p_i$ of $\omega$ being the true state as well as the probability $p^\omega_i$ of observing a certain consequence when doing some action $a_i$ if $\omega$ is the true model.

Following \cite{osborne1994course}, we define a new game $G^\ast$ in which its players are all of the possible combinations $(i, t_i) \in N \times T_i$, where the possible actions for $(i,T_i)$ is $A_i$. We see that fixing a player $i \in N$, the posterior probability $p(\omega | \tau^{-1}_i (t_i))$ induces a lottery over the pairs $(a^\ast(j,\tau_j(\omega)))_j,\omega)$ for some other $j \in N$. This lottery assigns to $(a^\ast(j,\tau_j(\omega)))_j,\omega)$ the probability $p_i(\omega) / p_i (\tau^{-1}_i (t_i))$ if $\omega \in \tau^{-1}_i (t_i)$. The classical Bayesian game will simply call a Nash equilibrium for the game $G^\ast$ a Nash equilibrium of the original game; but we have the second level of uncertainty: the consequences caused by some action $a$ through a causal model $\omega$. We notice that the posterior probability itself induces a probability distribution defined over \textit{actions} for each player once a \textit{desired consequence} is fixed, this distribution, according to Theorem \ref{causal_savage} is given by $p^\omega_i (c | do(a^\ast_i), a^\ast_{-i}) p_i(\omega | \tau^{-1}_i (t_i))$. This motivates the following definition of a \textit{Causal Nash equilibrium}.
\subsection{Causal Nash Equilibrium}
For each player $i \in N$ in the strategic game, we define the following probability distribution over consequences:
\begin{equation}{\label{causal_ut}}
p^a_i (c) =  p^\omega_i (c | do(a_i), a_{-i}) p_i(\omega)\textrm{ for } a \in A=A_1 \times \cdots \times A_N.
\end{equation}
where $p^\omega_i$ is the probability of causing a certain consequence within a causal model $\omega$ and $p_i$ are the player's \textit{posterior beliefs} about the causal model that controls the environment, and $do()$ is the well known intervention operator from \cite{pearl2009causality}. We now define:
\begin{equation}
u^C_i (a) = \sum_{c \in C}  u_i(c) p^a_i (c) \textrm{ for } a \in A=A_1 \times \cdots \times A_N.
\end{equation}
Notice that $u^C_i$ evaluates an action profile $a \in A$ in terms of: The knowledge about the causal model of each player represented by $p_i$, which allows each player to evaluate the probability of causing outcomes in terms of actions by using the $do$ operator as well as the other actions taken by the other players, given by $a_{-i}$ and the preferences of each player $u_i$. Using this new function, we define the equilibrium for a strategic game with causal information and Bayesian players as:
\begin{definition}
A Nash equilibrium for this \textit{causal strategic game} is an action profile $a^\ast \in A$ if and only if
\begin{equation}
 u^C_i(a^\ast) \geq u^C_i(a_i, a^\ast_{-i}) \textrm{ for any other } a_i \in A_i. 
 \end{equation}
\end{definition}
This is, an action profile is a Nash equilibrium if and only if each player uses her current knowledge about the causal model of the environment in order to (causally) produce the best possible outcome given the actions taken by the other players. The existence of the Causal Nash Equilibrium is guaranteed if every $A_i$ is a nonempty compact convex set in some $\mathbb{R}^n$ and if the preference relation induced by $u^C_i$ is continuous and quasi-concave.

\section{Conclusion}
We have studied Decision Making under uncertainty in the case where a Causal Graphical Model is responsible for producing an outcome given an action (intervention) of the decision maker. We have provided a rational decision making criterion for the case in which the decision maker does not know the causal model, but has probabilistic beliefs about possible models. 

Using our decision making result, and taking as a basis Harsanyi's model of a Bayesian Game in which every player has incomplete information about both the actions taken by other players as well as the information that made each player take his action we have been able to provide a definition of a Causal Nash Equilibrium in which every player is aware that there exists a Causal Mechanism that will produce some consequence once he takes an action. 

Our decision making result (i.e., Theorem \ref{causal_savage}), besides motivating the Causal Nash Equilibrium, also provides an optimality criterion for learning algorithms in causal settings such as those presented in~\cite{lattimoreNIPS2016,sen2017identifying,gonzalez2018playing}. Our definition of Causal Equilibrium takes into account classical game theory through the incorporation of the classical von Neumann-Morgenstern utility function as well as the fundamental notion in Causation of Pearl's $do$ operator. 

We hope this works contributes to recent efforts of giving Causation its well deserved place in Artificial Intelligence as well as motivating further research in computational aspects of Causal Decision Theory.

\bibliographystyle{ieeetr}
\bibliography{Bibliografia.bib}
\end{document}